\newtheorem{thm}{Theorem}
\newtheorem{cor}[thm]{Corollary}
\newtheorem{lemma}[thm]{Lemma}
\newtheorem{definition}[thm]{Definition}
\newenvironment{proof}{\noindent\bf{Proof.}\rm}{\hfill$\blacksquare$\bigskip}
\begin{document}

\title{Interchanging distance and capacity in probabilistic mappings}

\author{Reid Andersen\thanks{Microsoft Live Labs,
One Microsoft Way, Redmond, WA 98052. {\tt reidan@microsoft.com}.}
\and Uriel Feige\thanks{Department of Computer Science and Applied
Mathematics, Weizmann Institute, Rehovot 76100, Israel. {\tt
uriel.feige@weizmann.ac.il}. Supported in part by The Israel
Science Foundation (grant No. 873/08).}}

\maketitle

\begin{abstract}
    Harald R\"{a}cke [STOC 2008] described a new method to obtain
hierarchical decompositions of networks in a way that minimizes the
congestion. R\"{a}cke's approach is based on an equivalence that he
discovered between minimizing congestion and minimizing stretch
(in a certain setting). Here we present R\"{a}cke's equivalence in an
abstract setting that is more general than the one described in
R\"{a}cke's work, and clarifies the power of R\"{a}cke's result. In
addition, we present a related (but different) equivalence that
was developed by Yuval Emek [ESA 2009] and is only known to apply
to planar graphs.
\end{abstract}

\section{Introduction}

In this manuscript we present results of a manuscript of Harald
R\"{a}cke titled ``Optimal hierarchical decompositions for congestion
minimization in networks"~\cite{Racke}. Our presentation is more
modular than the original presentation of R\"{a}cke in that it
separates the existential aspects of R\"{a}cke's result from the
algorithmic aspects. The existential results are presented in a
more abstract setting that allows the reader to appreciate the
generality of R\"{a}cke's result. Our presentation is also more
careful not to lose on the tightness of the parameters (e.g., not
to give away constant multiplicative factors). For slides of a
talk based on this manuscript see~\cite{FeigeSlides}.

Our manuscript is organized as follows. In
Section~\ref{sec:bisection} we discuss the optimization problem of
min-bisection. Achieving an improved approximation ratio for this
problem is one of the results of~\cite{Racke}, and  we use this
problem as a motivation for the main results that follows. In
Section~\ref{sec:existence} we present in an abstract setting what
we view as R\"{a}cke's main result, namely, an equivalence between two
types of probabilistic embeddings, one concerned with faithfully
representing distances and the other with faithfully representing
capacities. In Section~\ref{sec:algorithms} we briefly discuss
algorithmic versions of the existential result. In
Section~\ref{sec:applications} we show how the machinery developed
leads to an approximation algorithm for min-bisection. In
Section~\ref{sec:planar} we present results related to the main
theme of this manuscript, but that do not appear in~\cite{Racke}.
These results concern an equivalence between deterministic
embeddings in planar graphs, and were first developed and used by
Yuval Emek in~\cite{emek}.

\section{Min-bisection}
\label{sec:bisection}

In the min-bisection problem, the input is a graph with an even
number $n$ of vertices. In the weighted version of the problem,
edges have arbitrary nonnegative weights, whereas in the
unweighted version, the weight of every edge is~1. A bisection of
the graph is a partition the set of vertices into two sets of
equal size. The width of a bisection is the total weight of the
edges that are cut (an edge is cut if its endpoints are on
different sides of the partition). Min-bisection asks for a
bisection of minimum width. This problem is NP-hard.

One line of research dealing with the NP-hardness of min-bisection
offers a bi-criteria approximation. Namely, it is concerned with
developing algorithms that produce a partition of the graph into
nearly equal parts (rather than exactly equal parts), such that
the width of the partition is not much larger than the width of
the minimum bisection. The methodology used by these algorithms
was developed in a sequence of papers and currently allows one to
efficiently find a near bisection (e.g., each of the two parts has
at least one third of the vertices) whose width is within a
multiplicative factor of $O(\sqrt{\log n})$ of the width of the
minimum bisection. The methodology used by these papers is related
to the theme of the current manuscript. It also uses an interplay
between distance and capacity. We briefly explain this interplay,
and refer the readers to~\cite{LR,LLR,ARV} for more details.

For simplicity, assume that the input graph is a complete graph,by
replacing non-edges by edges of weight 0. View the weight of an
edge in a min-bisection problem as specifying its capacity. The
width of a bisection is the total capacity of its associated cut.
The first phase of the bicriteria approximation algorithm involves
solving some linear program LP (in~\cite{LR,LLR}) or semidefinite
program SDP (in~\cite{ARV}). The output of this mathematical
program can be thought of as a fractional cut in the following
sense: edges are assigned lengths, and the longer an edge is, a
larger fraction of it belongs to the cut. Naturally, for the
fractional solution to have small value, the LP (or SDP) will try
to assign short lengths to edges of high capacity. (This is a
theme that will reappear in the proof of
Theorem~\ref{thm:equivalentM}.) Thereafter, this fractional
solution is rounded to give a near bisection of width not much
larger than the value of the fractional solution. The rounding
procedure is more likely to cut the long edges than the short
edges. Or equivalently, two vertices of short distance from each
other (where distance is measured by sum of edge lengths along
shortest path) are likely to fall in the same side of the
partition. Hence to find a near bisection of small capacity, the
bicriteria approximation algorithms introduce an intermediate
notion of distance, and carefully choose (as a solution to the LP
or SDP) a distance function that interacts well with the
capacities of the edges. In fact, there is a formal connection
between the distortion of this distance function in comparison to
$\ell_1$ distances and the approximation ratio (in terms of
minimizing capacity) that one gets from this methodology.
(See~\cite{LLR} for an exact statement.)

To move from a bicriteria approximation to a true approximation
(in which the output is a true bisection, and approximation is
only in the sense that the width is not necessarily optimal), it
appears unavoidable that one should use in some way dynamic
programming. Consider the task of determining whether a graph has
a bisection of width 0. Such a bisection exists if and only if
there a set of connected components of the graph whose total
size is exactly $n/2$. Determining whether this is the case
amounts to solving a subset sum problem (with sizes of connected
components serving as input to the subset sum problem), and the
only algorithm known to solve subset sum (in time polynomial in
the numbers involved) is dynamic programming. In~\cite{FK}, the
techniques used in the bicreteria approximation were combined with
a dynamic programming approach to produce a true approximation for
min-bisection with approximation ratio $O((\log n)^{3/2})$
(obtained as the bicriteria approximation ratio times $O(\log
n)$). Here we shall describe R\"{a}cke's approach that gives a better
approximation ratio, $O(\log n)$.

Trees (and more generally, graphs of bounded treewidth, though
this subject is beyond the scope of the current manuscript) form a
family of graphs on which many NP-hard problems can be solved
using dynamic programming. In particular, min bisection can be
solved in polynomial time on trees. This suggests the following
plan for approximating min-bisection on general graphs, which is
presented here using terms that are suggestive but have not been
defined yet. First, find a ``low distortion embedding" of the
input graph into a tree. Then solve min bisection optimally on the
tree, using dynamic programming. The solution will induce a
bisection on the original graph, and the approximation ratio will
be bounded by the distortion of the embedding into the tree.

The plan as described above has certain drawbacks. One is that the
distortion when embedding a general graph into a tree might be
very large (e.g., the distance distortion for an $n$-cycle). This
problem has been addressed in a satisfactory way in previous
work~\cite{AKPW,bartal}. Rather than embed the graph in one tree,
one finds a probabilistic embedding into a family of (dominating)
trees (the requirement that the trees be dominating is a technical
requirement that will be touched upon in
Section~\ref{sec:applications}), and considers average distortion
(averaged over all trees). When the objective function is linear
(as in the case of min bisection), the probabilistic notion of
embedding suffices. Hence the modified plan is as follows. Find a
low distortion probability distribution over embeddings of the
input graph into (dominating) trees. Then solve min bisection
optimally on each of these trees. Each of these solutions will
induce a bisection on the original graph, and for the best of them
the approximation ratio will be bounded by the {\em average}
distortion of the probabilistic embedding into trees.

The known probabilistic embeddings of graphs into trees are
tailored to minimize average distortion, where the aspect that is
being distorted is distance between vertices. However, for the
intended application of min bisection, the aspect that interests
us is the capacity of cuts rather than distances. Hence R\"{a}cke's
approach for approximating min-bisection is as described above,
but with the distinction that the distortion of embeddings is
measured with respect to capacity rather than distance.

To implement this approach, one needs to design probabilistic
embeddings with low capacity distortion. Here is an informal
statement of R\"{a}cke's result in this respect.

\begin{thm}
\label{thm:RackeInformal} For every graph on $n$ vertices there is
a probability distribution of embeddings into (dominating) trees
with $O(\log n)$ average distortion of the capacity. Moreover,
these embeddings can be found in polynomial time.
\end{thm}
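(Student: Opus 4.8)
The plan is to deduce Theorem~\ref{thm:RackeInformal} from two ingredients: the classical probabilistic embedding of metrics into dominating trees, and an abstract \emph{interchange} principle (the equivalence advertised in the abstract, stated later as Theorem~\ref{thm:equivalentM}) that converts a guarantee about \emph{distance} distortion, quantified over all length functions, into a guarantee about \emph{capacity} distortion, quantified over all capacity functions. First I would recall the theorem of Fakcharoenphol, Rao and Talwar (building on the line of work~\cite{AKPW,bartal}): the shortest-path metric of any weighted $n$-vertex graph $G$ admits a probability distribution over dominating tree metrics whose expected stretch on every pair of vertices is $O(\log n)$, and this distribution has polynomial-size support and is computable in polynomial time. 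This supplies the ``distance side'' of the argument, leaving the task of transporting it to the ``capacity side''.

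The core of the proof, and the step I expect to be the main obstacle, is this interchange. I would phrase the search for the best probabilistic embedding with respect to capacity as the value of a zero-sum game: the minimizing player chooses a (decomposition) tree on the vertex set of $G$, while the maximizing player chooses a witness to large capacity distortion, which by linear-programming duality can be taken to be an assignment of \emph{lengths} to the edges of $G$ --- exactly the ``fractional cut'' phenomenon noted in Section~\ref{sec:bisection}, where the adversary is pushed toward giving short lengths to high-capacity edges. Von Neumann's minimax theorem then lets me interchange the order of play: in the interchanged game the adversary commits first to a length function, i.e.\ to a metric on the vertex set, and the minimizer need only produce a single tree that handles that one metric \emph{in the capacity sense}. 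The crux is to show that a tree which $O(\log n)$-approximates a fixed metric in the distance sense also serves the minimizer against the corresponding adversary strategy in the capacity sense; granting this, plugging in the Fakcharoenphol--Rao--Talwar bound shows the value of the interchanged game, hence of the original game, is $O(\log n)$, which is precisely the asserted average capacity distortion. Throughout I would keep the stretch one-sided (the trees are dominating on both sides) so that no constant factor is given away in the translation, as promised in the introduction.

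Finally, for the algorithmic claim I would replace the abstract minimax by an explicit construction. Running the ellipsoid method on the exponential-size linear program defining the optimal distribution, with the polynomial-time Fakcharoenphol--Rao--Talwar algorithm serving as an approximate separation oracle, already yields a polynomial-size distribution in polynomial time; more transparently, I would use a multiplicative-weights update: maintain a capacity vector on the edges of $G$, at each round invoke the distance-embedding algorithm on the current weighted graph to obtain a tree, increase multiplicatively the weights of the edges that this tree overloads, and repeat for a polynomial number of rounds. The standard analysis of multiplicative weights then shows that the uniform distribution over the trees produced has average capacity distortion $O(\log n)$, giving simultaneously the polynomial running time and a polynomial support size. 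The FRT input and the multiplicative-weights wrapper are by contrast either a citation or routine; the genuine content lies in setting up the two games so that one and the same tree plays both the capacity and the distance role, and in checking that the minimax interchange preserves the $O(\log n)$ bound exactly.
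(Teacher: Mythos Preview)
Your proposal is correct and follows essentially the same route as the paper: FRT supplies the distance-side guarantee, the minimax/zero-sum framework of Lemmas~\ref{lem:distanceM} and~\ref{lem:congestionM} transfers it to the capacity side via Theorem~\ref{thm:equivalentM}, and the algorithmic claim is handled by the multiplicative-weights (regret-minimization) scheme of Section~\ref{sec:weaker} with FRT as the $\delta$-response oracle. The ``crux'' you flag turns out in the paper's abstract setup to be a one-line substitution rather than a separate argument: given capacities $c$ and the adversary's coefficients $\beta$ in the congestion game, set $\ell_i = \beta_i/c_i$ and $\alpha_i = \beta_i$, and the weighted-stretch inequality~(\ref{eq:distanceM}) becomes the weighted-congestion inequality~(\ref{eq:congestionM}) verbatim --- so the step you anticipate as the main obstacle is in fact the shortest part of the proof.
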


This theorem is analogous to known theorems regarding the
distortion of distances in probabilistic embeddings~\cite{FRT}.
R\"{a}cke's proof of Theorem~\ref{thm:RackeInformal} is by a reduction
between these two types of embeddings. The existence of such a
reduction may not be unexpected (as an afterthought), because as
we have seen in the bicriteria approximation algorithms, there are
certain correspondences between capacity and distance.

As a direct consequence of the theorem above and the modified plan
for approximating min bisection, one obtains an $O(\log n)$
approximation for min bisection. This will be discussed in more
detail in Section~\ref{sec:applications}.

\section{An abstract setting}
\label{sec:existence}

In this section we present an abstract setting that as a special
case will lead to the existential component of
Theorem~\ref{thm:RackeInformal}.

\subsection{Definitions}
\label{sec:definitions}

Let $E$ be a set (of edges) and $\cal{P}$ a collection of nonempty multisets of $E$
(that we call paths). A mapping $M:E \longrightarrow \cal{P}$ maps to every edge $i
\in E$ a path $P \in \cal{P}$. It will be convenient to represent a mapping by a
matrix $M$, where $M_{ij}$ counts the number of times the edge $j$ lies on the path
$M(i)$.

{\bf Spanning tree example.} $E$ is a set of edges of a connected
graph $G$. Consider a spanning tree $T$ of $G$, and let $\cal{P}$
be the set of simple paths in $T$. Then there is a natural mapping
$M$ that maps every edge $(i,j)\in E$ to the set of edges that
form the unique simple path between vertices $i$ and $j$ in $T$.

{\bf Tree embedding example.} $E$ is a set of edges of a connected
graph $G$. Consider an arbitrary tree $T$ defined over the same
set of vertices as $G$ (edges of $T$ need not be edges of $G$). As
in the spanning tree example above, there is a natural mapping
from edges of $G$ to paths in $T$. However, this is not a mapping
in the sense defined above, because edges of $T$ are not
necessarily edges of $G$. To remedy this situation, we represent
each edge $(i,j)$ of $T$ by a set of edges that form a simple path
between $i$ and $j$ in $G$. This representation is not unique
(there may be many simple paths between $i$ and $j$ in $G$), and
hence some convention is used to specify one such path uniquely
(for example, one may take a shortest path, breaking ties
arbitrarily). Hence now each edge of $T$ corresponds to a set of
edges in $G$, and each simple path in $T$ corresponds to a
collection of several such sets. Now $T$ can be the mapping that
maps each edge $(i,j)$ in $G$ to a multiset of edges of $G$ that
is obtained by joining together (counting multiplicities) the sets
of edges of $G$ that form the paths that correspond to the edges
of $T$ that lie along the simple path connecting $i$ and $j$ in
$T$.

{\bf Graph embedding example.} This is a generalization of the
tree embedding example. $E$ is a set of edges of a connected graph
$G$ on $n$, and $H$ is an arbitrary different graph defined on the
same set of vertices. Now an edge $(i,j)\in E$ is mapped to some
path (using a convention such as that of taking a shortest path)
that connects $i$ and $j$ in $H$, and this path in $H$ is
represented as a multiset of edges in $G$ (as in the tree
embedding example). This defines a mapping $M$. Natural
alternative versions of the mapping reduce the multiset to a set,
either by removing multiplicities of edges in the multiset, or by
more extensive processing (e.g., if the multiset corresponded to a
nonsimple path in $G$ that contains cycles, these cycles may
possibly be removed).

{\bf Hypergraph example.} $E$ is the set of hyperedges of a
connected 3-uniform hypergraph $H$. $T$ is a spanning tree of the
hypergraph $H$, in the sense that it is defined on the same set of
vertices as $H$, and every edge $(i,j)$ of $T$ is labeled by some
hyperedge of $H$ that contains vertices $i$ and $j$. Then a
hyperedge ${i,j,k}$ can be mapped to the set of hyperedges that
label the set of edges along the paths that connect $i,j$ and $k$
in $T$.

We note that the tree embedding example is essentially the setting
in R\"{a}cke's work~\cite{Racke}. However the spanning tree example
suffices in order to illustrate the main ideas in R\"{a}cke's work.

Let $\cal{M}$ be a family of {\em admissible} mappings. A
probabilistic mapping between $E$ and $\cal{M}$ is a probability
distribution over mappings $M \in \cal{M}$. That is, with every $M
\in \cal{M}$ we associate some $\lambda_M \ge 0$, with $\sum_M
\lambda_M = 1$.

We shall consider probabilistic mappings in two different contexts.

\begin{definition}
\label{def:distanceM}
{\bf (Distance mapping).} Every edge $i$ has a positive length $\ell_i$ associated
with it. We let $dist_M(i)$ denote the length of the path $M(i)$, namely $dist_M(i)
= \sum_j M_{ij}\ell_j$. The {\em stretch} of an edge $i$ is
$\frac{dist_M(i)}{\ell_i}$. The average stretch of an edge in a probabilistic
mapping is the weighted average (weighted according to $\lambda_M$) of the stretches
of the edge. The stretch of a probabilistic mapping is the maximum over all edges of
their average stretches.
\end{definition}

The stretch of a particular edge may be smaller than~1. However,
the stretch of the shortest edge will always be at
least~1.

Probabilistic distance mappings were considered
in~\cite{AKPW,EEST,ABN} for the spanning tree example, and
in~\cite{bartal,FRT} for the tree embedding example.

\begin{definition}
\label{def:congestionM}
{\bf (Capacity mapping).} Every edge $i$ has a positive capacity $c_i$ associated
with it. We let $load_M(j)$ denote the sum (with multiplicities) of capacities of
edges whose path under $M$ contains $j$. Namely, $load_M(j) = \sum_i M_{ij}c_i$.
The {\em congestion} of an edge $j$ is $\frac{load_M(j)}{c_j}$. The average
congestion of an edge in a probabilistic mapping is the weighted average (weighted
according to $\lambda_M$) of the congestions of the edge. The congestion of a
probabilistic mapping is the maximum over all edges of their average congestions.
\end{definition}

The congestion of an edge may be smaller than~1. However, the sum
of all capacities in $M(E)$ is at least as large as the sum of all
capacities in $E$, implying that the congestion of a probabilistic
mapping is always at least~1.

For concreteness, let us present the notions of distance, stretch,
load and congestion as applied to the spanning tree example.
Consider a connected graph $G$ in which every edge $e = (i,j)$ has
a positive length $\ell_e$ and a positive capacity $c_e$. Consider
an arbitrary spanning tree $T$ of $G$, and the mapping from $G$ to
$T$ described in the spanning tree example above. Then the
distance of edge $e$ is the sum of length of edges along the
unique simple path that connects vertices $i$ and $j$ in $T$. The
stretch of $e$ is then the ratio between this distance and
$\ell_e$. The load on edge $e$ is 0 if $e$ is not part of the
spanning tree. However, if $e$ is part of the spanning tree, the
load is computed as follows. Removing $e$, the tree $T$ decomposes
into two trees, one containing vertex $i$ (that we call $T_i$) and
the other containing vertex $j$ (that we call $T_j$). The load of
$e$ is the sum of capacities of all edges (including $e$ itself)
that have one endpoint in $T_i$ and the other in $T_j$. The
congestion of $e$ is the ratio between the load and $c_e$.

\subsection{Probabilistic mappings as zero-sum games}
\label{sec:games}

We shall use the following standard consequence of the minimax
theorem for zero sum games (as in~\cite{AKPW}).

\begin{lemma}
\label{lem:distanceM} For every $\rho \ge 1$ and every family of admissible mappings $\cal{M}$, there is a probabilistic mapping with stretch at
most $\rho$ if and only if for every nonnegative coefficients
$\alpha_i$, there is a mapping $M \in \cal{M}$ such that $\sum
\alpha_i \frac{dist_M(i)}{\ell_i} \le \rho \sum \alpha_i$.
\end{lemma}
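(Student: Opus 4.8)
The plan is to derive the statement from von Neumann's minimax theorem applied to a suitable zero-sum game. First I would set up the game: the ``minimizer'' (the mapping player) chooses a mapping $M \in \cal{M}$, and the ``maximizer'' (the edge player) chooses an edge $i \in E$. The payoff, to be minimized by the first player and maximized by the second, is the stretch $\frac{dist_M(i)}{\ell_i}$. A mixed strategy for the minimizer is exactly a probabilistic mapping, i.e. a distribution $\{\lambda_M\}$ over $\cal{M}$; a mixed strategy for the maximizer is a distribution over edges, which I will describe by nonnegative coefficients $\alpha_i$ that we may normalize so that $\sum_i \alpha_i = 1$. The value of the game when the minimizer plays $\{\lambda_M\}$ and the maximizer plays $\{\alpha_i\}$ is $\sum_M \lambda_M \sum_i \alpha_i \frac{dist_M(i)}{\ell_i}$.

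Next I would unwind the two sides of the claimed equivalence in terms of the game value. Saying ``there is a probabilistic mapping with stretch at most $\rho$'' means there is a mixed strategy $\{\lambda_M\}$ for the minimizer such that for \emph{every} pure strategy $i$ of the maximizer (hence for every mixed strategy, by linearity), the expected payoff $\sum_M \lambda_M \frac{dist_M(i)}{\ell_i}$ is at most $\rho$; by Definition~\ref{def:distanceM} this is precisely the condition that the stretch of the probabilistic mapping is at most $\rho$. So the left-hand side of the equivalence says the upper value of the game is at most $\rho$. On the other hand, the right-hand side says: for every mixed strategy $\{\alpha_i\}$ of the maximizer there is a pure response $M \in \cal{M}$ of the minimizer with $\sum_i \alpha_i \frac{dist_M(i)}{\ell_i} \le \rho \sum_i \alpha_i = \rho$ (after normalization); i.e. the lower value of the game is at most $\rho$. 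The minimax theorem equates these two values, so one direction is immediate from minimax, and the reverse direction ($\Leftarrow$) is the one requiring the theorem; the forward direction ($\Rightarrow$) is trivial since a good mixed strategy for the minimizer guarantees the bound against every response, in particular against any $\{\alpha_i\}$.

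One technical point I would address carefully: the minimax theorem in its classical form requires finite strategy sets (or appropriate compactness/convexity), whereas $\cal{M}$ and $E$ are described abstractly and could be infinite. I would handle this either by assuming, as is implicit in the applications, that $E$ is finite and $\cal{M}$ is finite (or that $\cal{M}$ may be replaced by its convex hull / a finite subset without loss, as is standard — cf.\ the reference to~\cite{AKPW}), or by invoking an infinite-dimensional version of the minimax theorem. Since the lemma is cited as ``a standard consequence of the minimax theorem,'' I would keep this brief: restrict attention to the setting where the relevant sets are finite, note that the general case follows by a standard compactness argument, and then the proof is just the translation described above. I expect this finiteness/compactness bookkeeping to be the only real obstacle; the core argument is a direct dictionary between ``probabilistic mapping of small stretch'' and ``mixed strategy of small value'' on one side, and ``best response to any weighting $\alpha_i$'' and ``lower value'' on the other, glued together by minimax.
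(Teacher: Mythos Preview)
Your proposal is correct and follows essentially the same approach as the paper: both set up the identical zero-sum game between a mapping player and an edge player with payoff equal to the stretch, identify probabilistic mappings with mixed strategies for the former and the weights $\alpha_i$ with mixed strategies for the latter, and invoke the minimax theorem. Your write-up is slightly more explicit in noting that the forward direction is a trivial averaging argument and in flagging the finiteness/compactness hypothesis needed for minimax, but the substance is the same.
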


\begin{proof}
Consider a zero sum game in which the player MAP chooses an
admissible mapping $M$, and the player EDGE chooses an edge $i$.
The value of the game for for EDGE is the stretch of $i$ in the
mapping, and hence EDGE wishes to maximize the stretch whereas MAP
wishes to minimize it. A probabilistic mapping is a randomized
strategy for MAP. Choosing nonnegative coefficients $\alpha_i$
(and scaling them so that $\sum \alpha_i = 1$) is a randomized
strategy for EDGE.

{\em The ``only if" direction.} If there is no randomized strategy
for MAP forcing an expected value at most $\rho$, then the minimax
theorem implies that there must be a randomized strategy for EDGE
that enforces an expected value more than $\rho$, regardless of
which mapping player MAP chooses to play.

{\em The ``if" direction.} If there is no randomized strategy for
EDGE forcing an expected value larger than $\rho$, then the
minimax theorem implies that there must be a randomized strategy
for MAP that enforces an expected value of at most $\rho$,
regardless of which edge player EDGE chooses to play.
\end{proof}

\begin{lemma}
\label{lem:congestionM}
For every $\rho \ge 1$ and every family of admissible mappings $\cal{M}$, there is a
probabilistic mapping with congestion at most $\rho$ if and only if for every
nonnegative coefficients $\beta_i$, there is a mapping $M \in \cal{M}$ such that
$\sum \beta_i \frac{load_M(i)}{c_i} \le \rho \sum \beta_i$.
\end{lemma}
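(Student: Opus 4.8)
The plan is to run exactly the same argument as in Lemma~\ref{lem:distanceM}, but with a zero-sum game whose payoff is congestion rather than stretch. I would set up a game in which player MAP chooses an admissible mapping $M \in \cal{M}$ and player EDGE chooses an edge $j \in E$; the value of the game for EDGE is the congestion $\frac{load_M(j)}{c_j}$ of edge $j$ under $M$, so EDGE wishes to maximize and MAP wishes to minimize. A probabilistic mapping (an assignment of weights $\lambda_M \ge 0$ with $\sum_M \lambda_M = 1$) is precisely a randomized strategy for MAP, and a choice of nonnegative coefficients $\beta_j$, rescaled so that $\sum_j \beta_j = 1$, is a randomized strategy for EDGE. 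The only observation that needs checking is that the expected payoff is bilinear: against MAP's pure strategy $M$, the randomized strategy $\beta$ of EDGE yields expected payoff $\sum_j \beta_j \frac{load_M(j)}{c_j}$ (recalling that $load_M(j) = \sum_i M_{ij} c_i$), which is linear in $\beta$ and, averaged over $M$, linear in $\lambda$ as well, so the minimax theorem applies just as it did for stretch.

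Given this, both directions follow by contraposition, mirroring the proof of Lemma~\ref{lem:distanceM}. For the ``only if'' direction: if no randomized strategy for MAP forces expected congestion at most $\rho$, the minimax theorem yields a randomized strategy $\beta$ for EDGE forcing expected value strictly greater than $\rho$ against every pure strategy $M$ of MAP, i.e., $\sum_j \beta_j \frac{load_M(j)}{c_j} > \rho = \rho \sum_j \beta_j$ for all $M \in \cal{M}$; removing the normalization shows that there are nonnegative $\beta_j$ with $\sum_j \beta_j \frac{load_M(j)}{c_j} > \rho \sum_j \beta_j$ for every $M$, which is the contrapositive of the claimed right-hand side. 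For the ``if'' direction: if no randomized strategy for EDGE forces expected value larger than $\rho$, the minimax theorem yields a randomized strategy for MAP --- a probabilistic mapping --- that enforces expected congestion at most $\rho$ against every edge, which is exactly a probabilistic mapping of congestion at most $\rho$.

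The proof thus requires no new ideas beyond those in Lemma~\ref{lem:distanceM}; the substance is simply the recognition that $load_M(j)/c_j$, like $dist_M(i)/\ell_i$, is a bilinear cost and hence fits the zero-sum game formulation. The only point that warrants care is the same one left implicit there: $\cal{M}$ and $E$ may a priori be large, so one should either work in the finite setting where the minimax theorem applies verbatim (which suffices for all applications in this manuscript) or appeal to an appropriate infinite-dimensional minimax theorem together with a compactness argument. I do not anticipate a genuine obstacle here.
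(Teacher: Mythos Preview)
Your proposal is correct and matches the paper's own treatment exactly: the paper simply states that the proof of Lemma~\ref{lem:congestionM} is similar to that of Lemma~\ref{lem:distanceM} and omits it, and your write-up is precisely that transplanted argument with congestion in place of stretch.
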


The proof of Lemma~\ref{lem:congestionM} is similar to the proof of
Lemma~\ref{lem:distanceM}, and hence is omitted.

\subsection{Main result}

\begin{thm}
\label{thm:equivalentM}
For every $\rho \ge 1$ and every family of admissible mappings $\cal{M}$, the following two
statements are equivalent:

\begin{enumerate}

\item For every collection of lengths $\ell_i$ there is a probabilistic mapping with
stretch at most $\rho$.

\item For every collection of capacities $c_j$ there is a probabilistic mapping with
congestion at most $\rho$.

\end{enumerate}

\end{thm}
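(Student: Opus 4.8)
The plan is to obtain both implications from the two minimax reformulations already in hand, Lemma~\ref{lem:distanceM} and Lemma~\ref{lem:congestionM}, using a single ``transpose'' substitution that turns a stretch certificate into a congestion certificate. The key structural fact is that $dist_M(i)=\sum_j M_{ij}\ell_j$ and $load_M(j)=\sum_i M_{ij}c_i$ are the two ways of contracting the matrix $M$ against a vector, so the two notions are formally interchangeable once the right dictionary between $(\ell_i,\alpha_i)$ and $(c_j,\beta_j)$ is chosen. Consequently it suffices to prove the implication $(1)\Rightarrow(2)$; running the mirror-image argument (with the roles of lengths/stretch and capacities/congestion swapped) gives $(2)\Rightarrow(1)$.

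To prove $(1)\Rightarrow(2)$, fix capacities $c_j>0$. By Lemma~\ref{lem:congestionM} it is enough to show that for every nonnegative vector $\beta$ there is a mapping $M\in\cal{M}$ with $\sum_j \beta_j \frac{load_M(j)}{c_j}\le \rho\sum_j\beta_j$. Assume first that $\beta_j>0$ for every $j$. Set $\ell_j:=\beta_j/c_j$ (a legal, strictly positive length) and $\alpha_i:=\beta_i$. Invoking hypothesis~(1) for the length vector $\ell$ and then Lemma~\ref{lem:distanceM} with the coefficients $\alpha$, we get an actual mapping $M\in\cal{M}$ with $\sum_i \alpha_i \frac{dist_M(i)}{\ell_i}\le\rho\sum_i\alpha_i$. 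The point of the substitution is that $\alpha_i/\ell_i=c_i$ and $\beta_j/c_j=\ell_j$, so expanding the distances and loads and exchanging the order of summation yields
\[
\sum_i \alpha_i \frac{dist_M(i)}{\ell_i}=\sum_{i,j} M_{ij}\, c_i\, \frac{\beta_j}{c_j}=\sum_j \beta_j \frac{load_M(j)}{c_j},\qquad \sum_i\alpha_i=\sum_j\beta_j,
\]
so the stretch inequality for $M$ is literally the congestion inequality we wanted.

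The one technical wrinkle, and the only real obstacle, is that Definitions~\ref{def:distanceM} and~\ref{def:congestionM} force lengths and capacities to be strictly positive, while Lemma~\ref{lem:congestionM} must be checked for all nonnegative $\beta$, including $\beta$ with zero coordinates (for which $\ell_j=\beta_j/c_j$ would be an illegal length). Perturbing $\ell$ alone does not work cleanly because $dist_M$ and $load_M$ treat their indices asymmetrically; instead one should perturb at the level of the zero-sum game underlying the lemmas. If no probabilistic mapping had congestion at most $\rho$, the minimax theorem (as in the proof of Lemma~\ref{lem:congestionM}) gives a distribution $\beta^\ast$ over the edges with $\sum_j \beta_j^\ast \frac{load_M(j)}{c_j}\ge \rho'\sum_j\beta_j^\ast$ for all $M\in\cal{M}$, for some fixed $\rho'>\rho$; replacing $\beta^\ast$ by the mixture $(1-\delta)\beta^\ast$ plus $\delta$ times the uniform distribution keeps this inequality strict for $\delta$ small enough (congestions are nonnegative, and the excess $\rho'-\rho$ is a fixed positive quantity) while making every coordinate positive. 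Applying the transpose identity of the previous paragraph to this strictly positive $\beta^\ast$ produces a mapping $M$ contradicting hypothesis~(1) via Lemma~\ref{lem:distanceM}. Since $\mathcal{M}$ may be infinite, it is important here that Lemma~\ref{lem:distanceM} hands us an actual mapping rather than merely an infimum. The symmetric argument, with $c_j:=\alpha_j/\ell_j$ and $\beta_i:=\alpha_i$, establishes $(2)\Rightarrow(1)$ and completes the proof.
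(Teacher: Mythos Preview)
Your proof is correct and follows essentially the same route as the paper: both reduce the equivalence to the two minimax Lemmas~\ref{lem:distanceM} and~\ref{lem:congestionM} and then use the identical substitution $\ell_j=\beta_j/c_j$, $\alpha_i=\beta_i$ (and its mirror image $c_i=\alpha_i/\ell_i$, $\beta_j=\alpha_j$) to turn one weighted inequality into the other. The only difference is cosmetic---the paper proves $(2)\Rightarrow(1)$ first and simply leaves implicit the zero-coefficient issue that you handle explicitly via perturbation.
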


\begin{proof}
We first prove that item~2 implies item~1.

Assume that there is a probabilistic mapping from $E$ using
$\cal{M}$ with congestion at most $\rho$. By
Lemma~\ref{lem:congestionM}, for every nonnegative coefficients
$\beta_j$, there is a mapping $M \in \cal{M}$ such that $\sum
\beta_j \frac{load_M(j)}{c_j} \le \rho \sum \beta_i$. Hence, using
the notation from Definition~\ref{def:congestionM}, for every
nonnegative coefficients $\beta_j$ we have a mapping satisfying:

\begin{equation}
\label{eq:congestionM}
\sum_{j,i \in E} \beta_j M_{ij} \frac{c_i}{c_j} \le \rho \sum_{j\in E} \beta_j
\end{equation}

We need to prove that there is a probabilistic mapping from $E$ using $\cal{M}$ with
stretch at most $\rho$. By Lemma~\ref{lem:distanceM}, it suffices to prove that for
every nonnegative coefficients $\alpha_i$, there is a mapping $M \in \cal{M}$ such
that $\sum \alpha_i \frac{dist_M(i)}{\ell_i} \le \rho \sum \alpha_i$. Hence, using
the notation from Definition~\ref{def:distanceM}, for every nonnegative coefficients
$\alpha_i$ we need to find a mapping satisfying:

\begin{equation}
\label{eq:distanceM}
\sum_{i,j \in E} \alpha_i M_{ij} \frac{\ell_j}{\ell_i} \le \rho \sum_{i\in E} \alpha_i
\end{equation}

Choosing $\beta_j = \alpha_j$ and $c_i = \alpha_i/\ell_i$ (and likewise, $c_j =
\alpha_j/\ell_j$) and substituting in inequality~(\ref{eq:congestionM}), we obtain
inequality~(\ref{eq:distanceM}).

The proof that item~1 implies item~2 is similar, choosing $\alpha_i = \beta_i$ and
$\ell_i = \beta_i/c_i$, and then inequality~(\ref{eq:distanceM}) becomes
inequality~(\ref{eq:congestionM}).
\end{proof}

Observe that the proof of Theorem~\ref{thm:equivalentM} does not assume that entries
of matrices $M \in \cal{M}$ are nonnegative integers (except for the issue that
$\rho$ is stated as a quantity of value at least~1). Neither does it assume that
distances or capacities are nonnegative (though they cannot be~0, since the
expressions for stretch and congestion involve divisions by distances or
capacities).

\subsection{Simultaneous stretch and congestion bounds}
\label{sec:simultaneous}

Let $\cal{M}$ be a family of admissible mappings for which there is a probabilistic mapping
with stretch at most $\rho$. Hence by Theorem~\ref{thm:equivalentM}, there is also a
(possibly different) probabilistic mapping with congestion at most $\rho$. However, this does not imply
that there is a probabilistic mapping which
simultaneously achieves stretch at most $\rho$ and congestion at most $\rho$.

Consider the following example. The edges $E$ are the set of edges
of the following graph $G$. $G$ has two special vertices denoted
by $s$ and $t$. There are $\sqrt{n}$ vertex disjoint paths between
$s$ and $t$, each with $\sqrt{n}$ edges. In addition, there is the
edge $(s,t)$. Hence altogether $E$ contains $n + 1$ edges. The
family $\cal{M}$ of admissible mappings is the canonical family corresponding to the set of all spanning trees of $G$, as described in Section~\ref{sec:definitions}; given an edge of $E$ and a spanning tree $T$ of $G$,
the edge is mapped to the unique path in $T$ joining its
endpoints.

Let $\ell$ be an arbitrary length function on $E$. The following
is a probabilistic mapping into $\cal{M}$ of stretch at most~3.
Let $P$ be the shortest path in $G$ between $s$ and $t$ (breaking
ties arbitrarily). It may be either the edge $(s,t)$ or one of the
$\sqrt{n}$ paths. For the probabilistic mapping, we choose a random spanning tree as follows.  All edges of $P$ are contained in
the spanning tree. In addition, from every path $P' \not= P$,
exactly one edge is deleted, with probability proportional to the
length of the edge.

Let us analyze the expected stretch of the above probabilistic
mapping. For edges along the path $P$, the stretch is~1. Consider
now an edge of length $\ell$ in a path $P' \not= P$ of length $L$.
With probability $(L - \ell)/L < 1$ the edge remains in the random
spanning tree, keeping its original length. With probability
$\ell/L$ the edge is not in the spanning tree, and then it is
mapped to a path of length at most $2L - \ell < 2L$ (we used here
the fact that the length of $P$ is at most $L$). Hence the
expected stretch is smaller than $1 +
2L\frac{\ell}{L}\frac{1}{\ell} \le 3$.

Let $c$ be an arbitrary capacity function on $E$. The following is
a probabilistic mapping into $\cal{M}$ of congestion at most~3.
For every path $P_j$ between $s$ and $t$ (including the edge
$(s,t)$ as one of the paths), let $e_j$ denote the edge of minimum
capacity on this path, and let $c_j$ be its capacity. Choose a
random spanning tree that contains all edges except the $e_j$
edges, and exactly one of the $e_j$ edges, chosen with probability
proportional to its capacity.

Let us analyze the expected congestion of the above probabilistic
mapping. Consider an arbitrary edge $e$ on path $P_j$. Its
capacity $c$ is at least $c_j$. The load that it suffers is at
most its own capacity $c$, plus perhaps the capacity $c_j \le c$
of edge $e_j$, plus with probability $c_j/(\sum_i c_i) \le
c/(\sum_i c_i)$ a capacity of $\sum_{i \not= j} c_i \le \sum_i
c_i$, which in expectation contributes at most $c$. Hence
altogether, its expected load is at most $3c$.

The two probabilistic mappings that we designed (one for stretch,
one for congestion) are very different. In particular, the sizes
of the supports are $n^{\sqrt{n}}$ for the stretch case and
$\sqrt{n} + 1$ for the congestion case. We now show that if in the
graph $G$ all lengths and all capacities are equal to~1 (the
unweighted case), every probabilistic mapping must have either
stretch or congestion at least $\sqrt{n}/2$. To achieve stretch
less than $\sqrt{n}/2$, the edge $(s,t)$ must belong to the random
spanning tree with probability at least $1/2$. However, whenever
$(s,t)$ is in the spanning tree, then from every other path
connecting $s$ and $t$ one edge needs to be removed, contributing
a total load of $\sqrt{n}$ to the edge $(s,t)$. It is also
interesting to observe that a random spanning tree (chosen
uniformly at random from all spanning trees) contains the edge
$(s,t)$ with probability exactly 1/2 (this is because the
effective resistance between $s$ and $t$ is 1/2, details omitted).
Hence the uniform distribution over spanning trees is
simultaneously bad (distortion at least $\sqrt{n}/2$) both for
stretch and for congestion.

We have seen that the probabilistic mappings that achieve low
congestion may be very different than those that achieve low
stretch. However, as we shall see in Section~\ref{sec:planar}, for
the special case considered here (that of spanning trees in planar
graphs), there are additional connections between stretch and
congestion, based on planar duality.

\section{Algorithmic aspects}
\label{sec:algorithms}

In Section~\ref{sec:existence} our discussion was concerned only
with the existence of probabilistic mappings. Here we shall
discuss how such mappings can be found efficiently. As we have
seen in Section~\ref{sec:games}, and using the notation of
Lemma~\ref{lem:distanceM}, the problem of finding a probabilistic
mapping with smallest distortion can be cast as a problem of
finding an optimal mixed strategy for the player MAP, in a zero
sum game between the players MAP and EDGE. We shall briefly review
the known results concerning the computation of optimal mixed
strategies in zero sum games, and how they can be applied in our
setting. Hence in a sense, this section is independent of
Section~\ref{sec:existence}.

\subsection{An LP formulation of zero sum games}

It is well known that the value of a zero sum game is a solution
to a linear program (LP), and that linear programming duality in
this case implies the minimax theorem.

Consider a game matrix $A$ with $r$ rows (the pure strategies for
MAP) and $c$ columns (the pure strategies for EDGE), in which
entry $A_{ij}$ contains the payoff for EDGE if MAP plays row $i$
and EDGE plays column $j$. Map wishes to select a mixed strategy
that minimizes the expected payoff, whereas EDGE wishes to select
a mixed strategy that maximizes this payoff.

With each row $i$ we can associate a variable $x_i$ that denotes
the probability with which row $i$ is played in MAP's mixed
strategy. Then the linear program is to {\em minimize} $\rho$
subject to:

\begin{itemize}

\item
$\sum_i A_{ij}x_i \le \rho$ for all columns $j$,

\item
$\sum x_i = 1$,

\item
$x_i \ge 0$ for all $i$.

\end{itemize}

An immediate consequence of this LP formulation is that an optimal
solution to the LP can be found in time polynomial in the size of
the game matrix $A$ (e.g., using the Ellipsoid algorithm).
However, in our context of probabilistic mappings, it will often
be the case that the size of $A$ is not polynomial in the
parameters of interest. For example, when mapping a graph $G$ into
a distribution over spanning trees, the parameter of interest is
typically $n$, the number of vertices in the graph. The number of
edges (and hence number of columns in $A$) is at most $n^2$ and
hence polynomially bounded in $n$. However, the number of mappings
(number of spanning trees in $G$) might be of the order of
$n^{\Omega(n)}$, which is not polynomial in $n$. Hence if one is
interested in algorithms with running time polynomial in $n$, one
cannot even write down the matrix $A$ explicitly (though the graph
$G$ serves as an implicit representation of $A$).

Though the case that will interest us most is when there are
superpolynomially many row strategies and polynomially many column
strategies, let us discuss first the case when there are
polynomially many row strategies and superpolynomially many column
strategies.

We note that in the discussions that follow we shall assume that
all payoffs are rational numbers with numerators and denominators
represented by a number of bits that is polynomial in a parameter
of interest (such as the smallest of the two dimensions of $A$).

\subsection{Superpolynomially many column strategies}
\label{sec:expcolumn}

As we cannot afford to write the matrix $A$ explicitly, we need to
assume some other mechanism for accessing the strategies of the
column player. Typically, this is viewed abstractly as {\em
oracle} access. A natural oracle model is the following:

{\bf Best response oracle}. Given a mixed strategy for the row
player, the oracle provides a pure strategy for the column player
of highest expected payoff (together with the corresponding column
of $A$).

If a best response oracle is available, one may still run the
ellipsoid algorithm (with the best response oracle serving as a
separation oracle) and obtain an optimal solution to the LP (and
hence an optimal mixed strategy for the game). See~\cite{GLS} for
more details on this approach.

\subsection{Superpolynomially many row strategies}
\label{sec:exprow}

Here we address the main case of interest, when there are
superpolynomially many row strategies, but only polynomially many
column strategies. One issue that has to be dealt with is whether
an optimal mixed strategy for the row player can be represented at
all in polynomial space, given that potentially it requires
specifying probabilities for superpolynomially many strategies.
Luckily, the answer is positive. Mixed strategies are solutions to
linear programs, and linear programs have {\em basic feasible
solutions} in which the number of nonzero variables does not
exceed the number of constraints (omitting the nonnegativity
constraints $x_i \ge 0$). Hence there is an optimal mixed strategy
for the row player whose support (the number of pure strategies
that have positive probability of being played) is not larger than
the number of columns.

To access the pure strategies of the row player, let us assume
here that we have a best response oracle for the row player. Now a
standard approach is to consider the dual of the LP, which
corresponds to finding an optimal mixed strategy for the column
player. By analogy to Section~\ref{sec:expcolumn}, one can use the
ellipsoid algorithm to find an optimal mixed strategy for the
column player.

An optimal solution to the dual LP has the same value as an
optimal solution to the primal LP, but is not by itself a solution
to the primal LP (and hence, we still did not find an optimal
mixed strategy for the row player). However, there are certain
ways of leveraging the ability to solve the dual LP and using it
so as to also solve the primal LP. One such approach employs an
exploration phase that finds polynomially many linearly
independent constraints of the dual that are tight at the optimal
dual solution, and then find an optimal primal solution that is
supported only on the primal variables that correspond to these
dual constraints. Details are omitted here (but presumably appear
in~\cite{GLS}).

\subsection{Weaker oracle models and faster algorithms}
\label{sec:weaker}

In the games that interest us, typically there are polynomially
many columns (the column player is EDGE who can play an edge in
the graph) and superpolynomially many rows (the row player Map has
exponentially many mappings to choose from). In this respect, we
are in the setting of Section~\ref{sec:exprow}. However, we might
not have a best response oracle representation of MAP. Instead we
shall often have a weaker kind of oracle.

{\bf $\delta$-response oracle}.  Given a mixed strategy for the
column player, the oracle provides a pure strategy for the row
player (together with the corresponding row of $A$) that limits
the expected payoff (to column player) to at most $\delta$.

Let $\rho$ be the true minimax value of the game. Then the value
of $\delta$ for a $\delta$-response oracle must be at least
$\rho$, but in general might be much larger than $\rho$. If this
is the only form of access to the pure strategies of the row
player, finding an optimal mixed strategy for the row player
becomes hopeless. Hence the goal is no longer to find the optimal
mixed strategy for the row player, but rather to find a mixed
strategy that limits the expected payoff of the column player to
at most $\delta$ (plus low order terms). We sketch here an
approach of Freund and Schapire~\cite{FS} that can be used. It is
based on the use of {\em regret minimizing} algorithms.

Consider an iterative process in which in each round, the column
player selects a mixed strategy, the row player selects in
response a $\delta$-response (pure) strategy, and the column
player collects the expected payoff of his mixed strategy against
that pure strategy. If the column player is using a regret
minimizing online algorithm in order to select his mixed
strategies (such as using a multiplicative weight update rule),
then after polynomially many rounds (say, $t$), his payoff (which
can be at most $\delta t$) is guaranteed to approach (up to low
order terms) the total payoff that the best fixed column pure
strategy can achieve against the actual sequence of pure
strategies played by the row player. This means that if the row
player plays the mixed strategy of choosing one of the $t$ rounds
at random and playing the row strategy that was played in this
round, no pure column strategy has expected payoff significantly
larger than $\delta$. For more details on this subject, the reader
is referred to~\cite{FS}, or to surveys such as~\cite{BM}
or~\cite{AHK}.

\subsection{Implementation for probabilistic mappings}
\label{sec:implementation}

Consider a zero sum game as in the setting of
Lemma~\ref{def:distanceM}. EDGE has polynomially many strategies,
whereas MAP potentially has exponentially many strategies. A best
response oracle for MAP needs to be able to find a best response
for MAP against any given mixed strategy of EDGE. In many
contexts, finding the best response is NP-hard. However, for the
intended applications, often a $\delta$-response suffices,
provided that one can guarantee that $\delta$ is not too large.
Indeed, given coefficients $\alpha_i$ of the edges of the input
graph, one can find a spanning tree with average stretch
$\tilde{O}(\log n)$~\cite{ABN} (the $\tilde{O}$ notation hides
some lower order multiplicative terms), and a tree embedding of
average stretch $O(\log n)$~\cite{FRT}. This in combination with
an algorithmic framework similar to that outlined in
Section~\ref{sec:weaker} gives probabilistic embeddings with
stretch $\tilde{O}(\log n)$ and $O(\log n)$ respectively.

The above results in combination with
Theorem~\ref{thm:equivalentM} imply that there is also a
probabilistic mapping into spanning trees with congestion
$\tilde{O}(\log n)$ and a probabilistic mapping into (arbitrary)
trees with congestion $O(\log n)$. To actually find such a mapping
algorithmically, one needs to find a mixed strategy for the player
MAP in the corresponding zero sum game. The algorithmic framework
of Section~\ref{sec:weaker} shows that this can be done if we can
implement a $\delta$-response oracle for MAP. We have already seen
that such an oracle can be implemented for distance mapping, but
now need to do so for capacity mappings. Luckily, the proof of
Theorem~\ref{thm:equivalentM} can be used for this purpose. It
shows how to transform any $\delta$-response query to a capacity
mapping oracle into a $\delta$-response query to a distance
mapping oracle. This establishes the algorithmic aspect of
Theorem~\ref{thm:RackeInformal}.  We remark that the resulting
probabilistic mappings have support size polynomial in $n$.

\section{Applications}
\label{sec:applications}

R\"{a}cke describes several applications to his results, with
oblivious routing being a prominent example. Here we concentrate
only on one of the applications, that of min-bisection that served
as our motivating example.

Let $G$ be a connected graph on an even number $n$ of vertices, in
which edges have nonnegative capacities. One wishes to find a
bisection of minimum width (total capacity of edges with endpoints
in different sides of the bipartization). We present here a
polynomial time algorithm with approximation ratio $\tilde{O}(\log
n)$.

Consider an arbitrary spanning tree $T$ of $G$. Every edge $e =
(i,j)$ of $T$ partitions the vertices of $G$ into two sets that we
call $T_i$ and $T_j$. Define the load $load_T(e)$ of edge $e$ to
be the sum of capacities of edges of $G$ with one endpoint in
$T_i$ and the other endpoint in $T_j$. (This is consistent with
Section~\ref{sec:definitions}.)

Consider now an arbitrary bipartization $B$ of $G$. Let $E_T(B)$
be the set of edges of $T$ that have endpoints in different sides
of the bipartization. Then the width of the bipartization is at
most $\sum_{e\in E_T(B)} load_T(e)$. (The load terms count every
edge of $G$ cut by the bipartition at least once and perhaps
multiple times, and possibly also count edges of $G$ not in the
bipartization.) This is the {\em domination} property that we were
referring to in Section~\ref{sec:bisection}.

By Theorem~\ref{thm:RackeInformal}, whose proof is summarized in
Section~\ref{sec:applications}, one can find in polynomial time a
distribution over spanning trees of $G$ such that for every edge
of $G$, its expected congestion (over choice of random spanning
tree) is at most $\delta = \tilde{O}(\log n)$.

Consider an optimal bisection in $G$, and let $b$ denote its
width. For each edge cut by the bisection, its expected congestion
over the probabilistic mapping into spanning trees is at most
$\delta$. Summing over all edges cut by the bisection and taking a
weighted average over all spanning trees in the probabilistic
mapping, we obtain that at least in one such tree $T$, the width
of this bipartization (with respect to the load in that tree) is
at most $\delta b$.

The above discussion gives the following algorithm for finding a
bisection of small width in a graph $G$ whose minimum bisection
has width $b$.

\begin{enumerate}

\item Find a probabilistic mapping into spanning trees with
congestion at most $\delta$. (By the discussion above this step
takes polynomial time, and $\delta$ can be taken to be
$\tilde{O}(\log n)$. Furthermore, the set of spanning trees in the support of the probabilistic mapping has size polynomial in $n$.)

\item In each spanning tree, find an optimal bisection (with
respect to the load) using dynamic programming. This takes
polynomial time. Moreover, by the discussion above, in at least
one tree the bisection found will have width at most $\delta b$.

\item Of all the bisections found (one per spanning tree), take
the one that in $G$ has smallest width. By the domination
property, its width is at most $\delta b$.

\end{enumerate}

The approximation ratio that we presented above for min-bisection
is $\delta = \tilde{O}(\log n)$, rather than $O(\log n)$ as was
done by R\"{a}cke. To get the $O(\log n)$ approximation, instead of
probabilistic mappings into spanning trees one simply uses
probabilistic mappings into (arbitrary but dominating) trees. Then
one can plug in the bounds of~\cite{FRT} rather than the somewhat
weaker bounds of~\cite{ABN} and obtain the desired approximation
ratio. Details omitted.

\section{Spanning trees in planar graphs} \label{sec:planar}

In Section~\ref{sec:simultaneous} we saw that for distributions
over spanning trees of planar graphs, the distributions achieving
low stretch are very different from those achieving low
congestion. In this section we present an interesting connection
between low stretch and low congestion for spanning trees in
planar graphs. A similar connection was observed independently
(and apparently, before our work) by Yuval Emek~\cite{emek}.

The family of graphs that we shall consider is that of 2-connected
planar multigraphs. Specifically, the graphs need to be planar,
connected, with no cut edge (an edge whose removal disconnects the
graph), and parallel edges are allowed. In the context of spanning
trees, restricting graphs to be 2-connected is not a significant
restriction, because disconnected graphs do not have spanning
trees, and every cut edge belongs to every spanning tree and hence
does not contribute to the complexity of the problem. The reason
why we allow parallel edges is so that the notion of a dual of a
planar graph will always be defined. From every set of parallel
edges, a spanning tree may contain at most one edge.

Every planar graph can be embedded in the plane with no
intersecting edges. In fact, several algorithms are known to
produce such embeddings in linear time. This embedding might not
be unique, in which case we fix one planar embedding arbitrarily.
Given a planar embedding, the {\em dual} graph is obtained by
considering every face of the embedding (including the outer face)
to be a vertex of the dual graph, and every edge of the embedding
corresponds to an edge of the dual graph that connects the two
vertices that correspond to the two faces that the edge separates.
(The fact that the graph has no cut edges insures that the dual
has no self loops. Two faces that share more than one edge give in
the dual parallel edges.) The dual graph is planar and the planar
embedding that we associate with it is the one naturally obtained
by the above construction. Under this planar embedding, the dual
of the dual is the primal graph. Cycles in the primal graph
correspond to cuts in the dual graph and vice versa. It is well
known and easy to see that given a spanning tree in the primal
graph, the edges not in the spanning tree form a spanning tree in
the dual graph. (This also gives Euler's formula that $|V| - 1 +
|F| - 1 = |E|$.)

Consider now a length function on the edges of the primal graph.
Given a spanning tree of the primal graph, for every spanning tree
edge its stretch is~1, and for every other edge its stretch is
determined by the length of the fundamental cycle that it closes
with the spanning tree edges. In the dual graph, let the capacity
of an edge be equal to the length of the corresponding edge in the
primal graph. Consider the dual spanning tree. The congestion of
edges not on the dual spanning tree is~0 (one less than their
stretch in the primal). The load of an edge on the dual spanning
tree is precisely the sum of capacities of the corresponding
fundamental cycle in the primal graph, and hence the congestion is
exactly one more than the stretch in the primal.

The above deterministic correspondence has the following
probabilistic corollary.

\begin{cor}
\label{cor:planar} Consider an arbitrary 2-connected planar graph
$G$ and its planar dual $\bar{G}$. Assume that edges in $G$ have
nonnegative lengths whereas edges in $\bar{G}$ have nonnegative
capacities, and moreover, the capacity of an edge in $\bar{G}$ is
equal to the length of the corresponding edge in $G$. Then for
every probabilistic mapping of $G$ into spanning trees with
stretch $\rho$, the same distribution over the dual trees forms
probabilistic mapping of $\bar{G}$ into spanning trees with
congestion at most $\rho+1$. Likewise, for every probabilistic
mapping of $\bar{G}$ into spanning trees with congestion $\rho$,
the same distribution over the dual trees forms a probabilistic
mapping of $G$ into spanning trees with stretch at most $\rho+1$.
\end{cor}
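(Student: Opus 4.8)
The plan is to leverage the deterministic correspondence between primal spanning trees and dual spanning trees that was established in the paragraph immediately preceding the corollary, and then to average it over the probability distribution. First I would recall the key deterministic facts: if $T$ is a spanning tree of $G$, then the complementary edge set $\bar{T} = E(\bar{G}) \setminus T$ (identifying edges of $G$ with edges of $\bar{G}$ via planar duality) is a spanning tree of $\bar{G}$; for an edge $e$ on the primal spanning tree $T$, its stretch is $1$, while for an edge $e \notin T$ the stretch equals the total length of the fundamental cycle that $e$ closes with $T$, divided by $\ell_e$; and dually, for $\bar{e} \in \bar{T}$ the load equals the sum of capacities of the corresponding fundamental cycle in $G$, so the congestion of $\bar{e}$ is exactly $\mathrm{stretch}_T(e) + 1$ when $e$ is not in $T$ (since the fundamental cycle of $e$ in $G$ corresponds to the fundamental cut of $\bar{e}$ in $\bar{G}$), and the congestion of $\bar{e}$ is $0$ when $e \in T$. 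Thus, writing the correspondence edgewise: for every edge $e$ and every spanning tree $T$,
\[
\mathrm{congestion}_{\bar{T}}(\bar{e}) = \mathrm{stretch}_T(e) + 1 \quad \text{if } e \notin T, \qquad \mathrm{congestion}_{\bar{T}}(\bar{e}) = 0 \quad \text{if } e \in T.
\]
In particular, in all cases $\mathrm{congestion}_{\bar{T}}(\bar{e}) \le \mathrm{stretch}_T(e) + 1$, because when $e \in T$ the left side is $0$ and the right side is $1 + 1 = 2 \ge 0$ (indeed $\mathrm{stretch}_T(e) = 1$ there).

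Next I would average. Suppose $\{\lambda_T\}$ is a probabilistic mapping of $G$ into spanning trees with stretch at most $\rho$, meaning that for every edge $e$, $\sum_T \lambda_T \, \mathrm{stretch}_T(e) \le \rho$. Using the identical weights $\lambda_T$ on the dual trees $\bar{T}$, the average congestion of an edge $\bar{e}$ in $\bar{G}$ is
\[
\sum_T \lambda_T \, \mathrm{congestion}_{\bar{T}}(\bar{e}) \;\le\; \sum_T \lambda_T \bigl( \mathrm{stretch}_T(e) + 1 \bigr) \;=\; \left( \sum_T \lambda_T \, \mathrm{stretch}_T(e) \right) + 1 \;\le\; \rho + 1,
\]
using $\sum_T \lambda_T = 1$. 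Taking the maximum over all edges $\bar{e}$ gives that the dual distribution has congestion at most $\rho + 1$, proving the first assertion. For the second assertion I would argue symmetrically: the dual-of-dual is the primal (this was noted in the excerpt), and the deterministic correspondence run in the other direction gives, for a dual spanning tree $T'$ of $\bar{G}$ with complementary primal tree $\bar{T'}$, the bound $\mathrm{stretch}_{\bar{T'}}(e) \le \mathrm{congestion}_{T'}(\bar e) + 1$ for every edge. Averaging against a congestion-$\rho$ distribution on $\bar{G}$ then yields stretch at most $\rho + 1$ on $G$.

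The only genuinely delicate point — and the place I would be most careful — is verifying that the edgewise identity between primal stretch and dual congestion holds for the correct edge under the duality correspondence, including the $+1$ bookkeeping and the boundary cases where an edge is (or is not) in the relevant spanning tree. This is precisely the deterministic statement worked out in the paragraph before the corollary, so I would simply cite it; the rest is the routine linearity-of-expectation averaging above. One should also note that nonnegativity of lengths and capacities is used only to keep the notions of stretch and congestion meaningful (and to ensure the averages behave monotonically), and that the hypothesis that capacities in $\bar{G}$ equal the corresponding lengths in $G$ is exactly what makes the fundamental-cycle sums match up on the two sides.
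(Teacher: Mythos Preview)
Your proposal is correct and follows exactly the approach the paper intends: the paper states the corollary without proof, treating it as an immediate consequence of the deterministic per-edge correspondence established in the preceding paragraph, and your argument simply spells out the linearity-of-expectation step that turns that correspondence into the probabilistic claim. The edgewise inequalities you record (and the symmetric ones for the reverse direction) are precisely what the paper's paragraph gives you, and averaging against the distribution $\{\lambda_T\}$ is the right and only remaining step.
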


\subsection*{Acknowledgements}

We thank Satyen Kale for helpful discussions on the subjects of
this manuscript.

\end{document}